\newtheorem{theorem}{Theorem}
\newtheorem{remark}{Remark}
\title{Neural L\'evy SDE for State--Dependent Risk and Density Forecasting}
\author[1]{Ziyao Wang\thanks{Email: ziywang@ttu.edu}}
\author[1]{Svetlozar T. Rachev}
\affil[1]{Department of Mathematics and Statistics, Texas Tech University}
\date{\today}
\begin{document}
\maketitle

\begin{abstract}
Financial returns are known to exhibit heavy tails, volatility clustering and abrupt jumps that are poorly captured by classical diffusion models.  The risk management literature has responded by augmenting Brownian motion with jumps, by introducing generalised autoregressive conditional heteroskedasticity (GARCH) models and by developing nonparametric volatility estimators.  Advances in machine learning have enabled highly flexible functional forms for conditional means and volatilities, yet few models deliver interpretable state--dependent tail risk, capture multiple forecast horizons and yield distributions amenable to backtesting and execution.  This paper proposes a neural L\'evy jump--diffusion framework that jointly learns, as functions of observable state variables, the conditional drift, diffusion, jump intensity and jump size distribution.  We show how a single shared encoder yields multiple forecasting heads corresponding to distinct horizons (daily, weekly, etc.), facilitating multi--horizon density forecasts and risk measures.  The state vector includes conventional price and volume features as well as novel complexity measures such as permutation entropy and recurrence quantification analysis determinism, which quantify predictability in the underlying process.  Estimation is based on a quasi--maximum likelihood approach that separates diffusion and jump contributions via bipower variation weights and incorporates monotonicity and smoothness regularisation to ensure identifiability.  A cost--aware portfolio optimiser translates the model's conditional densities into implementable trading strategies under leverage, turnover and no--trade--band constraints.  Extensive empirical analyses on cross--sectional equity data demonstrate improved calibration, sharper tail control and economically significant risk reduction relative to baseline diffusive and GARCH benchmarks.  The proposed framework is therefore an interpretable, testable and practically deployable method for state--dependent risk and density forecasting.
\end{abstract}

\section{Introduction}
Time series of financial returns display a rich array of features including volatility clustering, heavy tails, skewness and occasional abrupt jumps.  Since the pioneering work of \citet{Engle1982} on autoregressive conditional heteroskedasticity (ARCH) and \citet{Bollerslev1986} on its generalisation (GARCH), econometric modelling of conditional heteroskedasticity has become a central tool for risk management.  These models capture time--varying volatility but rely on Gaussian or Student-
t distributions, and therefore cannot fully describe the asymmetry and abruptness of return distributions during crises.  \citet{Merton1976} introduced jump--diffusion processes to model occasional large moves, positing that asset prices follow a mixture of continuous Brownian motion and a compound Poisson process.  Jump--diffusion models add higher kurtosis and skewness to returns and have become standard in option pricing and risk management \citep{Kou2002}.  Yet classical formulations assume constant jump intensity and jump size distribution, failing to account for state--dependent risk that is responsive to market conditions.

In parallel, the development of nonparametric volatility estimators based on high frequency data, such as realized variance and bipower variation \citep{Barndorff2004}, has enabled decomposition of the quadratic variation into continuous and jump components.  These estimators motivate quasi--maximum likelihood approaches that estimate diffusion and jump parameters separately.  However, most of the literature focuses on single--horizon point forecasts, leaving multi--horizon density forecasts under--explored.  Risk managers often need to quantify tail risk at multiple horizons, for example daily and weekly value--at--risk (VaR) and expected shortfall (ES).  A coherent framework should therefore generate consistent conditional distributions across horizons.

The recent surge in machine learning and neural network methodologies has produced highly flexible models of conditional mean and variance functions, including neural stochastic differential equations (SDEs) and neural GARCH variants.  Neural SDEs treat the drift and diffusion coefficients as outputs of neural networks and can be trained using adversarial objectives \citep{Kidger2021}.  Yet the jump component remains under--developed, and many neural time--series models emphasise prediction accuracy rather than risk calibration.  Moreover, standard neural architectures seldom incorporate interpretable measures of predictability or complexity in their state representations.  The dynamical systems literature offers such measures: permutation entropy quantifies the degree of disorder in a sequence and is robust to noise, computationally fast and invariant to monotonic transformations \citep{BandtPompe2002}.  Recurrence quantification analysis (RQA) provides diagnostics of deterministic structure in time series, with the determinism (DET) statistic reflecting the proportion of recurrent points forming diagonal lines and hence the predictability of the system \citep{Marwan2007}.  While these measures are widely used in physics and physiology, their integration into financial risk models has been limited.

This paper introduces a novel neural L\'evy framework for state--dependent risk and density forecasting that addresses these gaps.  Our contributions can be summarised as follows.  First, we develop a jump--diffusion model in which the conditional drift, diffusion, jump intensity and jump size distribution are parameterised as functions of lagged returns, volumes and complexity features using a shared neural encoder.  The model yields multi--horizon density forecasts by producing separate sets of parameters for each forecast horizon directly from the shared latent representation.  Second, we augment the state vector with permutation entropy and RQA determinism, reflecting the inherent predictability of the return process and allowing the model to adapt to periods of high or low complexity.  Third, we propose a quasi--maximum likelihood estimation procedure that leverages bipower variation to separate diffusion and jump components, thereby improving parameter identifiability.  We impose monotonicity constraints on the mapping from complexity features to volatility and jump intensity to capture economic intuition that increased disorder should be associated with higher risk.  Fourth, we design a cost--aware portfolio optimization procedure that translates the model's predictive densities into trading signals while respecting leverage, turnover and no--trade band constraints, enabling practical deployment.  Finally, we conduct extensive empirical evaluation on cross--sectional equity data, benchmarking our model against diffusion--only, GARCH and jump--diffusion alternatives using metrics such as log predictive score, continuous ranked probability score (CRPS), probability integral transform (PIT), VaR backtests and drawdown statistics.  Our model achieves superior calibration and tail risk control, highlighting the value of combining state--dependent jumps, complexity features and multi--horizon forecasting.

The remainder of the paper is organised as follows.  Section~\ref{sec:lit} reviews related literature on volatility modelling, jump--diffusion processes, neural SDEs and complexity measures.  Section~\ref{sec:data} describes the data, target variables and feature engineering, including the computation of permutation entropy and RQA determinism.  Section~\ref{sec:model} presents the neural L\'evy jump--diffusion model and explains how multi--horizon density forecasts are generated.  Section~\ref{sec:estimation} introduces the quasi--maximum likelihood estimation procedure and regularisation techniques to ensure identifiability and stability.  Section~\ref{sec:implementation} outlines the cost--aware portfolio optimization and execution framework.  Section~\ref{sec:evaluation} discusses the evaluation protocol, including density scoring and risk backtesting.  Section~\ref{sec:empirics} reports empirical results and comparative analyses.  Section~\ref{sec:discussion} offers a discussion of the findings, limitations and avenues for further research.  Section~\ref{sec:conclusion} concludes.

\section{Literature Review}
\label{sec:lit}
This section reviews the literature relevant to our framework, beginning with volatility and jump modelling, then moving to neural time--series models, complexity measures in dynamical systems and risk evaluation metrics.

\subsection{Volatility and Jump Modelling}
ARCH models, introduced by \citet{Engle1982}, capture time--varying conditional variance through a regression of squared residuals on lagged squared residuals.  Generalised ARCH (GARCH) models, developed by \citet{Bollerslev1986}, generalise this idea to include lagged conditional variances, resulting in a parsimonious representation of volatility clustering.  GARCH models are widely used to forecast volatility and assess risk in financial markets.  Nevertheless, GARCH models impose specific functional forms and rely on Gaussian or Student-
t innovations that may not capture extreme tail behaviour or sudden jumps.

Jump--diffusion models augment continuous diffusions with jump components to reflect sudden large movements in asset prices.  In the canonical model of \citet{Merton1976}, the jump component is a compound Poisson process with constant intensity and lognormally distributed jumps, independent of the Brownian motion and Poisson process.  The model explains excess kurtosis and skewness observed in returns and underlies option pricing adjustments.  Subsequent work has proposed alternative jump size distributions and state--dependent jump intensities.  \citet{Kou2002} introduced the double exponential jump diffusion, where jump amplitudes follow a mixture of two exponentials, capturing both upward and downward jumps.  \citet{Eraker2003} and \citet{Duffie2000} considered stochastic volatility with jumps, using latent processes to model time--varying volatility and jump intensity.  \citet{Carr2002} proposed the CGMY process, a L\'evy jump process with infinite activity and finite variation, unifying several heavy--tailed distributions.  Despite these advances, most jump models assume static parameters or univariate dynamics, limiting their ability to adapt to changing market states and to generate multi--horizon density forecasts.

High--frequency econometrics has provided tools for estimating continuous and jump variations separately.  \citet{Barndorff2004} introduced realised variance and bipower variation estimators, showing that the difference between realised variance and bipower variation consistently estimates jump variation.  When the sampling frequency increases, realised variance converges to the quadratic variation, whereas bipower variation converges to the integrated variance (the continuous component).  These estimators support the development of jump detection tests and jump--augmented volatility models.  Our quasi--likelihood approach leverages bipower weights to assign observations to the diffusion or jump estimation head, drawing inspiration from this literature.

\subsection{Neural Time--Series Models}
Machine learning approaches have recently been applied to time series modelling, enabling flexible nonlinear relationships and latent dynamics.  Recurrent neural networks and long short--term memory networks capture nonlinearity and temporal dependencies, though they often focus on point forecasts.  Neural stochastic differential equations (SDEs) extend this flexibility to continuous--time processes by parameterising drift and diffusion coefficients using neural networks \citep{Kidger2021}.  \citet{KloedenPlaten1992} provide numerical schemes for SDEs, and neural SDEs treat sample paths as generative outputs trained via adversarial objectives or maximum likelihood.  However, typical neural SDE implementations ignore jumps or treat them as exogenous noise.  Our model fills this gap by parameterising both diffusion and jump components through neural networks and by producing distributional forecasts at multiple horizons.

Another relevant line of work uses neural networks to parameterise conditional volatility.  For example, neural GARCH variants replace the traditional ARCH recursion with a neural network mapping from lagged squared residuals to volatility.  Variational autoencoders have been applied to latent volatility models \citep{Tran2019}.  While flexible, these models remain diffusion--based.  The generative adversarial network (GAN) literature introduces adversarial training for distributional modelling \citep{Goodfellow2014}, and neural SDEs can be trained similarly \citep{Kidger2021}.  We adopt a maximum likelihood approach rather than an adversarial one, emphasising interpretability and testability.

\subsection{Complexity Measures in Dynamical Systems}
Understanding the predictability of a time series is crucial for risk modelling.  The concept of permutation entropy, introduced by \citet{BandtPompe2002}, provides a natural measure of complexity for time series by counting the permutations of neighbouring values.  It has several advantages: it is robust to noise, computationally fast and invariant to monotonic transformations, making it applicable to real--world data.  \citet{BandtPompe2002} argue that permutation entropy thus furnishes an intuitive and practical complexity measure.  Recurrence quantification analysis (RQA) is another tool for analysing dynamical systems.  RQA uses recurrence plots to visualise times at which a trajectory returns close to its past states.  The determinism (DET) statistic measures the proportion of recurrent points that form diagonal lines, indicating deterministic patterns, and is interpreted as a measure of predictability.  The RQA methodology can provide useful information for non--stationary or short data sets and has broad applicability \citep{Marwan2007}.  In white noise, recurrence plots contain few diagonal structures; deterministic processes produce many long diagonal lines, leading to high DET values \citep{Marwan2007}.  By including permutation entropy and DET as features, our model quantifies the intrinsic predictability of the market state and conditions its risk estimates accordingly.

\subsection{Risk Evaluation and Backtesting}
Evaluating the accuracy and calibration of predictive distributions requires appropriate scoring rules and backtests.  The log predictive density score (or negative log likelihood) assesses how well the predicted density matches observed outcomes.  The continuous ranked probability score (CRPS) generalises the mean absolute error to probabilistic forecasts and rewards sharp and calibrated forecasts \citep{Gneiting2007}.  CRPS measures the squared difference between the predictive cumulative distribution function (CDF) and the empirical step function of the observation, integrating over all thresholds; it considers the full distribution rather than a specific quantile.  It is noted that CRPS is widely used alongside cross entropy to evaluate probabilistic forecasts because it accounts for both calibration and sharpness.  Calibration can be assessed using the probability integral transform (PIT): transforming observed values through the predictive CDF yields uniform variables if the predictive distribution is correct.  Deviations from uniformity indicate miscalibration.  The probability integral transform theorem shows that values from any continuous distribution can be converted to uniform variables using their CDF, enabling straightforward diagnostic plots.

For tail risk evaluation, regulatory frameworks such as Basel rely on Value--at--Risk (VaR) and expected shortfall (ES) metrics.  Backtesting procedures assess whether the frequency of breaches (observations below the predicted VaR threshold) matches the nominal level and whether breaches occur independently over time.  \citet{Kupiec1995} introduced unconditional coverage tests comparing observed and expected breach counts.  \citet{Christoffersen1998} developed likelihood ratio tests for independence of breaches, using the number of consecutive breach pairs; the null hypothesis is that conditional coverage equals unconditional coverage.  \citet{Berkowitz2001} proposed tests based on transformations to normal variables and checking autocorrelation of transformed sequences.  Risk managers also examine drawdown statistics, including maximum drawdown and drawdown duration, to quantify worst--case losses.  Combining these metrics yields a comprehensive evaluation protocol.

\section{Data and Feature Engineering}
\label{sec:data}
This section describes the data sets, target variables and feature engineering procedures used to construct the state vector.  We focus on cross--sectional equity data but the methodology applies to other asset classes.

\subsection{Data Description}
Our primary data set consists of daily and weekly price and volume series for a large cross--section of equities listed on major U.S. exchanges.  The sample period spans two decades, covering a variety of market conditions including bull and bear markets, crises and periods of relative calm.  Returns are computed as log price increments: the daily return at time $t$ is $\Delta Y_{t} = \log P_{t} - \log P_{t-1}$, where $P_{t}$ is the closing price.  We also consider multi--horizon returns $\Delta Y_{t}^{(h)} = Y_{t+h} - Y_{t}$ for horizons $h$ of one day, one week, two weeks and one month, depending on the analysis.  Volume series are adjusted for splits and dividends.  Data cleaning steps include removal of days with zero volume, winsorisation of extreme returns at the $1\%$ level and alignment across tickers.

High--frequency data are used to construct realised variance (RV) and bipower variation (BV) estimates for diffusion and jump separation.  Intraday returns are aggregated at five--minute intervals to compute RV and BV for each day: RV is the sum of squared intraday returns, while BV is the sum of the absolute product of successive intraday returns, scaled appropriately \citep{Barndorff2004}.  The jump variation proxy is $\text{RV}_{t} - \text{BV}_{t}$, which is non--negative and captures the contribution of jumps.  Following \citet{Huang2011}, we clip negative values at zero to avoid numerical issues.  These high--frequency estimators are used exclusively in training to compute soft weights that allocate observations between the diffusion and jump estimation heads.

\subsection{Lagged Price and Volume Features}
The state vector includes conventional technical indicators based on price and volume.  We compute recent returns over various windows (e.g., five--day momentum, one--month momentum), realised volatility measures (e.g., rolling standard deviation of daily returns), and volume measures (e.g., average daily trading volume, turnover defined as volume divided by shares outstanding).  We also include order flow imbalance metrics and bid--ask spread proxies for market liquidity.  To avoid look--ahead bias, all features are computed using information available at time $t$ and are standardised within rolling windows using z--scores.

\subsection{Permutation Entropy}
Permutation entropy provides a robust measure of time series complexity by quantising the relative ordering of neighbouring values \citep{BandtPompe2002}.  Given an embedding dimension $m$ and delay $\tau$, one forms vectors $(x_{t}, x_{t-\tau}, \dots, x_{t-(m-1)\tau})$ and records the relative ordering of their elements.  There are $m!$ possible orderings (permutations).  The permutation entropy is defined as
\begin{equation}
H_{\text{perm}} = -\frac{1}{\log m!} \sum_{\pi} p_{\pi} \log p_{\pi},
\end{equation}
where $p_{\pi}$ is the relative frequency of permutation $\pi$ in the window.  The normalising factor $\log m!$ ensures that $H_{\text{perm}} \in [0,1]$, with values near zero indicating highly predictable sequences and values near one indicating complete disorder.  Permutation entropy has several advantages: it is invariant under monotonic transformations of the data, robust to noise and computationally efficient, properties emphasised by \citet{BandtPompe2002}.  We compute permutation entropy on rolling windows of length between 126 and 252 trading days for daily data and between 26 and 52 periods for weekly data.  We use embedding dimensions $m=3$ or $4$ and delay $\tau=1$, and standardise the resulting entropy values via rolling z--scores.  To mitigate the influence of outliers, we winsorise returns within each window before computing the entropy.

\subsection{Recurrence Quantification Analysis (RQA) Determinism}
Recurrence plots offer a graphical tool to visualise the times at which a trajectory revisits similar states in phase space.  Given a time series $(x_{1}, \dots, x_{n})$, we construct delay vectors $\mathbf{y}_{t} = (x_{t}, x_{t-\tau}, \dots, x_{t-(d-1)\tau})$ in an embedding dimension $d$.  A recurrence plot is an $n \times n$ matrix with entries
\begin{equation}
R_{ij} = \mathbbm{1}\{\|\mathbf{y}_{i} - \mathbf{y}_{j}\| \leq \epsilon\},
\end{equation}
where $\epsilon$ is a threshold distance and $\mathbbm{1}$ is the indicator function.  Recurrence quantification analysis summarises features of the recurrence plot through statistics such as recurrence rate (RR), determinism (DET) and laminarity.  The determinism statistic is defined as
\begin{equation}
\text{DET} = \frac{\sum_{l=l_{\min}}^{\infty} l \, P(l)}{\sum_{l=1}^{\infty} l \, P(l)},
\end{equation}
where $P(l)$ denotes the number of diagonal lines of length $l$ in the recurrence plot and $l_{\min}$ is a minimal line length threshold.  DET measures the proportion of recurrent points forming diagonal lines and hence reflects the predictability of the system.  In white noise, recurrence plots contain few diagonal lines and DET is low; deterministic dynamics produce many long diagonal structures and high DET values.  RQA can extract useful information even from short and non--stationary series and has found applications across scientific disciplines.

We compute DET as follows.  We fix the recurrence rate RR at a small constant, typically $3\%$, by adjusting the threshold $\epsilon$ for each window.  This normalises recurrence plots across different volatility regimes.  We choose embedding dimension $d=3$ and delay $\tau=1$, though robustness checks with other values show similar patterns.  We set $l_{\min} = 2$ to exclude single points.  The resulting DET values are standardised via rolling z--scores.  Intuitively, high DET indicates a more predictable regime, while low DET signals chaotic or noisy periods.  We expect the conditional drift and jump intensity to decrease as DET increases, capturing the idea that predictable markets exhibit lower risk.

\section{The Neural L\'evy Jump--Diffusion Model}
\label{sec:model}
This section develops the mathematical formulation of our model.  We first describe the continuous--time jump--diffusion dynamics and then explain how multi--horizon predictive densities are constructed using a neural network with multiple output heads.  We highlight the role of complexity features in modulating the model's parameters.

\subsection{State--Dependent Jump--Diffusion Dynamics}
Let $(\Omega,\mathcal{F},(\mathcal{F}_{t})_{t \ge 0},\mathbbm{P})$ be a filtered probability space supporting a standard Brownian motion $W$ and a Poisson process $N$ with intensity $\lambda_{t}$ adapted to $\mathcal{F}_{t}$.  The price process $Y_{t} = \log P_{t}$ evolves according to the stochastic differential equation
\begin{equation}
\mathrm{d}Y_{t} = \mu_{\theta}(X_{t}) \, \mathrm{d}t + \sigma_{\theta}(X_{t}) \, \mathrm{d}W_{t} + \mathrm{d}J_{t},
\end{equation}
where $X_{t}$ is a vector of observable state variables, including lagged returns, volume proxies, permutation entropy and DET; $\mu_{\theta}$ and $\sigma_{\theta}$ are drift and diffusion functions parameterised by a neural network with parameters $\theta$; and $J_{t}$ is a pure jump process defined by
\begin{equation}
J_{t} = \sum_{k=1}^{N_{t}} Z_{k},
\end{equation}
with $\{Z_{k}\}$ i.i.d. jump sizes drawn from a distribution $g(\cdot \mid \phi_{\theta}(X_{t}))$ whose parameters depend on the state through $\phi_{\theta}(X_{t})$.  The jump intensity is $\lambda_{\theta}(X_{t})$, which may vary with the state.  Conditional on $X_{t}$, the increments of $W$, $N$ and $\{Z_{k}\}$ are independent.  This specification nests the classical jump--diffusion of \citet{Merton1976} when $\mu_{\theta}, \sigma_{\theta}, \lambda_{\theta}$ and $\phi_{\theta}$ are constant.  It extends the double exponential model of \citet{Kou2002} by allowing the mixture proportions and scale parameters to depend on the state.  By coupling drift, diffusion and jump intensity to complexity measures, the model captures the intuition that predictable periods entail low jump risk and smoother drift, while disordered regimes exhibit high jump intensity and volatility.

For discrete time increments of length $h$ (e.g., one day, one week), the solution of the SDE over $[t,t+h]$ can be expressed as
\begin{equation}
\Delta Y_{t}^{(h)} = \mu_{\theta}^{(h)}(X_{t}) + \sigma_{\theta}^{(h)}(X_{t}) \, \varepsilon_{t}^{(h)} + \sum_{k=1}^{K_{t}^{(h)}} Z_{k,t}^{(h)},
\end{equation}
where $\varepsilon_{t}^{(h)} \sim \mathcal{N}(0,1)$, $K_{t}^{(h)} \sim \mathrm{Poisson}(\lambda_{\theta}^{(h)}(X_{t}))$ and $Z_{k,t}^{(h)}$ are i.i.d. jumps drawn from $g(\cdot \mid \phi_{\theta}^{(h)}(X_{t}))$.  The functions $\mu_{\theta}^{(h)}, \sigma_{\theta}^{(h)}, \lambda_{\theta}^{(h)}$ and $\phi_{\theta}^{(h)}$ are horizon--specific because the forecast horizon determines the integrated contributions of drift, diffusion and jump processes.  We do not simulate the continuous path but instead let a neural network output directly the parameters of the $h$--period increment distribution given $X_{t}$.  This is akin to the Euler–Maruyama discretisation but with parameters that summarise the expectation and variance of the integral over the horizon.

\subsection{Neural Architecture and Multi--Horizon Heads}
The core of our model is a shared encoder $\mathrm{Enc}(\cdot)$ that maps the state vector $X_{t}$ to a latent representation $\mathbf{h}_{t} \in \mathbbm{R}^{d}$.  We implement the encoder as a feedforward neural network with two hidden layers of dimension 64 and ReLU activations, though other architectures could be used.  Batch normalisation and dropout regularisation are applied to prevent overfitting.  The latent representation summarises the relevant information for predicting future distributional characteristics.

On top of the shared encoder, we attach separate output heads for each forecast horizon $h \in \mathcal{H}$ (e.g., $\mathcal{H} = \{1\text{ day}, 1\text{ week}, 2\text{ weeks}, 1\text{ month}\}$).  Each head consists of small feedforward networks (one hidden layer) that take $\mathbf{h}_{t}$ as input and output the parameters $\mu_{\theta}^{(h)}, \sigma_{\theta}^{(h)}, \lambda_{\theta}^{(h)}$ and $\phi_{\theta}^{(h)}$.  We enforce positivity constraints on $\sigma_{\theta}^{(h)}$ and $\lambda_{\theta}^{(h)}$ using softplus transformations and impose upper bounds on $\lambda_{\theta}^{(h)}$ (e.g., $\lambda \le 0.5$ for daily horizons and $\lambda \le 1.5$ for weekly horizons) to prevent degenerate parameter estimates.  The jump size distribution $g$ is chosen to be either a double exponential distribution with asymmetry parameter $\eta$ and scale parameter $\beta$ or a mixture of Gaussians.  For interpretability, we often use a two--component Gaussian mixture $g(z \mid \phi) = p \mathcal{N}(z; \mu_{1}, \tau_{1}^{2}) + (1-p) \mathcal{N}(z; \mu_{2}, \tau_{2}^{2})$, where $\phi = (p, \mu_{1}, \mu_{2}, \tau_{1}, \tau_{2})$ is output by the head with appropriate constraints ($0 \le p \le 1$, $\tau_{i}>0$).  The network thus produces a complete description of the $h$--horizon return distribution given the current state.

To capture the effect of complexity features, the network feeds permutation entropy and DET into the encoder along with price and volume variables.  We incorporate monotonicity constraints via spectral normalisation and monotonic activation functions in the layers connecting complexity features to volatility and jump intensity.  Specifically, we require that $\sigma_{\theta}^{(h)}$ and $\lambda_{\theta}^{(h)}$ be non--decreasing functions of permutation entropy and non--increasing functions of DET, reflecting the intuition that increased disorder (high entropy, low DET) implies higher volatility and jump intensity.  We implement monotonic networks using non--negative weights or by applying the Deep Lattice Network methodology \citep{You2017}.  We found that simple non--negative weight constraints suffice in our experiments.

\subsection{Interpretation and Decision Signals}
Beyond density forecasts, our model yields interpretable risk signals.  One such signal is the jump--adjusted Sharpe ratio $f_{\mathrm{ISJ}}$, defined for horizon $h$ as
\begin{equation}
    f_{\mathrm{ISJ}}^{(h)} = \frac{\mu_{\theta}^{(h)}(X_{t})}{\sqrt{\sigma_{\theta}^{2,(h)}(X_{t}) + \lambda_{\theta}^{(h)}(X_{t}) \, \mathbbm{E}[Z^{2} \mid \phi_{\theta}^{(h)}(X_{t})]}},
\end{equation}
where $\mathbbm{E}[Z^{2} \mid \phi]$ is the variance of the jump size distribution.  This ratio reflects the expected return per unit of total risk (continuous and jump) and can be used as a direction and leverage signal in portfolio construction.  Another signal penalises downward jumps by subtracting a cost term proportional to the expected magnitude of negative jumps:
\begin{equation}
    f_{\downarrow}^{(h)} = \mu_{\theta}^{(h)}(X_{t}) - c \, \lambda_{\theta}^{(h)}(X_{t}) \, \mathbbm{E}(|Z| \mathbf{1}_{\{Z<0\}} \mid \phi_{\theta}^{(h)}(X_{t})),
\end{equation}
where $c>0$ is a constant capturing the risk aversion to downside jumps.  These signals feed into the portfolio optimization procedure described in Section~\ref{sec:implementation}.  To manage exposure during extreme volatility or high jump intensity, we implement gates that reduce or eliminate positions when $\lambda_{\theta}^{(h)}$ exceeds a high percentile threshold computed from the training sample.  Different thresholds can be set for each horizon.

\section{Estimation and Training}
\label{sec:estimation}
Estimating the neural L\'evy jump--diffusion parameters poses two challenges: the unobservability of jumps and the high dimensionality of the neural parameter space.  We adopt a quasi--maximum likelihood approach that separates diffusion and jump contributions using bipower variation weights and penalises deviations from monotonicity and smoothness.  We begin by deriving the approximate likelihood for the $h$--horizon return distribution and then describe the training procedure.

\subsection{Quasi--Maximum Likelihood for Jump--Diffusion}
Under the discretised model of Section~\ref{sec:model}, the conditional distribution of the $h$--period return $\Delta Y_{t}^{(h)}$ given $X_{t}$ is a convolution of the normal distribution $\mathcal{N}(\mu_{\theta}^{(h)}, \sigma_{\theta}^{2,(h)})$ with a compound Poisson distribution of the jumps.  In general, the exact density is
\begin{equation}
    f(\Delta y \mid X_{t}) = \sum_{k=0}^{\infty} \frac{e^{-\lambda_{\theta}^{(h)}} (\lambda_{\theta}^{(h)})^{k}}{k!} \, g_{k}(\Delta y \mid \mu_{\theta}^{(h)}, \sigma_{\theta}^{(h)}, \phi_{\theta}^{(h)}),
\end{equation}
where $g_{k}$ is the density of a normal random variable plus the sum of $k$ jumps.  When the jump size distribution $g$ is Gaussian or double exponential, $g_{k}$ has a closed form as the $k$--fold convolution of $g$ and the normal density.  Evaluating the infinite sum is computationally expensive; we truncate at $k \le K_{\max}$ (typically $K_{\max}=3$) because the Poisson probabilities for larger $k$ become negligible when $\lambda_{\theta}^{(h)}$ is small (as enforced by our upper bounds).  The quasi--log--likelihood for a single observation is
\begin{equation}
    \ell_{t}^{(h)}(\theta) = \log \left[ \sum_{k=0}^{K_{\max}} \frac{e^{-\lambda_{\theta}^{(h)}(X_{t})} (\lambda_{\theta}^{(h)}(X_{t}))^{k}}{k!} \, g_{k}(\Delta Y_{t}^{(h)} \mid \mu_{\theta}^{(h)}(X_{t}), \sigma_{\theta}^{(h)}(X_{t}), \phi_{\theta}^{(h)}(X_{t})) \right].
\end{equation}
Summing over all observations and horizons yields the total log--likelihood.

Direct maximisation of this likelihood can lead to the diffusion variance absorbing jump contributions because jumps are latent.  To mitigate this identifiability issue, we employ a two--stage estimation akin to the double--head quasi--MLE used in realised GARCH models.  We first estimate diffusion parameters using a subset of data where jumps are unlikely, as indicated by a high bipower variation weight.  Specifically, define the weight
\begin{equation}
\omega_{t} = \min\left(1, \frac{\text{BV}_{t}}{\text{RV}_{t} + \varepsilon}\right),
\end{equation}
where $\varepsilon > 0$ is a small constant to avoid division by zero.  When $
\text{BV}_{t}$ is close to $
\text{RV}_{t}$, no jump is detected and $
\omega_{t}$ is near one; when $
\text{RV}_{t} > 
\text{BV}_{t}$, a jump is likely and $
\omega_{t}$ decreases.  We use $
\omega_{t}$ as a soft label indicating the proportion of the observation used for the diffusion head.  In stage one, we set $
\lambda_{\theta}^{(h)}$ and $
\phi_{\theta}^{(h)}$ to zero and maximise the log--likelihood of the normal distribution $
\mathcal{N}(\mu_{\theta}^{(h)}, \sigma_{\theta}^{2,(h)})$ weighted by $
\omega_{t}$.  In stage two, we reintroduce jumps and maximise the truncated compound likelihood weighted by $1 - 
\omega_{t}$ for the jump part and $
\omega_{t}$ for the diffusion part.  This procedure anchors the diffusion variance to the realised volatility and prevents the model from attributing all variation to jumps.

\subsection{Regularisation and Constraints}
We impose several regularisation terms on the neural network parameters.  First, we require monotonicity of the mapping from permutation entropy to volatility and jump intensity: $
\sigma_{\theta}^{(h)}$ and $
\lambda_{\theta}^{(h)}$ should increase with entropy and decrease with DET.  We implement this by constraining certain weights to be non--negative and using monotonic activation functions.  Second, we smooth the temporal variation of the parameters across $t$ by adding a total variation penalty on successive outputs:
\begin{equation}
\mathcal{R}_{\text{TV}}(\theta) = \sum_{t} \sum_{h \in \mathcal{H}} \big[ |\sigma_{\theta}^{(h)}(X_{t}) - \sigma_{\theta}^{(h)}(X_{t-1})| + |\lambda_{\theta}^{(h)}(X_{t}) - \lambda_{\theta}^{(h)}(X_{t-1})| \big].
\end{equation}
Third, we include an $
\ell^{2}$ weight decay on all network parameters to prevent overfitting.  Fourth, we bound the outputs of the jump intensity heads to avoid unrealistic values; for example, daily $
\lambda_{\theta}^{(h)}$ is constrained between $0$ and $0.5$ and weekly $
\lambda_{\theta}^{(h)}$ between $0$ and $1.5$.  Lastly, we enforce shape priors on the drift $
\mu_{\theta}^{(h)}$: it should be non--decreasing in short--term momentum variables (reflecting trend persistence) and non--increasing in permutation entropy (capturing that high complexity reduces predictability of drift).

The overall training loss combines the log--likelihood and regularisation terms.  For each horizon $h$, define
\begin{equation}
\mathcal{L}_{h}(\theta) = - \sum_{t} \ell_{t}^{(h)}(\theta) + \gamma_{1} \, \mathcal{R}_{\text{TV}}(\theta) + \gamma_{2} \, \|\theta\|_{2}^{2},
\end{equation}
where $\gamma_{1}, \gamma_{2} > 0$ are hyperparameters.  The total loss is $\mathcal{L}(\theta) = \sum_{h \in \mathcal{H}} \alpha_{h} \, \mathcal{L}_{h}(\theta)$ with weights $\alpha_{h}$ emphasising longer horizons if desired.  We optimise this loss using stochastic gradient descent (SGD) with Adam, employing mini--batches sampled across time and across assets.

\subsection{Training Procedure}
We adopt the following training scheme:
\begin{enumerate}
\item Pretraining: Use realised volatility $\text{BV}_{t}$ as a teacher signal to pretrain the diffusion head and the encoder for a few epochs.  In this stage, we freeze the jump intensity and jump size outputs and minimise the Gaussian log--likelihood weighted by $\omega_{t}$.  This encourages the network to learn volatility patterns before modelling jumps.\
\item Joint training: Initialize the full model with the pretrained weights and jointly optimise the diffusion and jump heads using the composite quasi--likelihood.  We apply gradient clipping to avoid exploding gradients and use dropout to regularise the network.  We split the sample into training and validation sets with purging and embargo periods to mitigate look--ahead bias in cross--validation.  Training is performed for 10--15 epochs.\
\item Hyperparameter tuning: We tune the embedding dimension for permutation entropy and RQA, the horizon weights $\alpha_{h}$, and the regularisation coefficients $\gamma_{1}$ and $\gamma_{2}$ using a grid search on a validation set.  We also explore alternative jump size families (double exponential versus Gaussian mixtures) and select based on the validation log--likelihood and risk metrics.\
\end{enumerate}
\section{Risk-Neutral Measure, No-Arbitrage, and Static Constraints}
\label{sec:rn-no-arb}

This section shows how to (i) convert the $\mathbb P$-measure distributional outputs of our neural L\'evy model into a risk-neutral measure $\mathbb Q$ suitable for valuation, and (ii) verify no-arbitrage via the existence of an equivalent (local) martingale measure. We provide both a discrete-horizon (one-step) construction that matches our multi-horizon heads, and a continuous-time construction based on Girsanov with jumps. Throughout, let $S_t=e^{Y_t}$ denote the (cum-dividend adjusted if needed) price process, and $r_t$ the short rate (or $r_t-q_t$ if a continuous dividend yield $q_t$ is present).

\paragraph{Model anchor.}
Under $\mathbb P$, our state-dependent jump--diffusion satisfies
\begin{equation}\label{eq:model-ct}
  dY_t \;=\; \mu_\theta(X_t)\,dt + \sigma_\theta(X_t)\,dW_t + dJ_t,
\end{equation}
and for a forecasting horizon $h>0$ the $h$-period increment is modeled as
\begin{equation}\label{eq:model-h}
  \Delta Y_t^{(h)} \;=\; \mu^{(h)}_\theta(X_t) \;+\; \sigma^{(h)}_\theta(X_t)\,\varepsilon^{(h)}_t
  \;+\; \sum_{k=1}^{K^{(h)}_t} Z^{(h)}_{k,t},
\end{equation}
where $\varepsilon^{(h)}_t\sim\mathcal N(0,1)$, $K^{(h)}_t\sim {\rm Poisson}\!\big(\lambda^{(h)}_\theta(X_t)\big)$, and $Z^{(h)}_{k,t}$ are i.i.d.\ with distribution $g(\cdot\mid \phi^{(h)}_\theta(X_t))$.

\subsection{Discrete-Horizon Risk-Neutralization (One-Step EMM)}
\label{subsec:discrete-rn}

Let $d_t^{(h)}=e^{-\int_t^{t+h} r_s\,ds}$ be the $h$-period discount factor. Risk-neutral (one-step) no-arbitrage requires the \emph{discounted} price to be a martingale over $[t,t+h]$, i.e.
\begin{equation}\label{eq:one-step-martingale}
  \mathbb E_{\mathbb Q}\!\left[S_{t+h} \,\big|\, \mathcal F_t\right] \;=\; e^{\int_t^{t+h} r_s\,ds}\, S_t
  \quad\Longleftrightarrow\quad
  \mathbb E_{\mathbb Q}\!\left[e^{\Delta Y_t^{(h)}} \,\big|\, X_t\right] \;=\; e^{\int_t^{t+h} r_s\,ds}.
\end{equation}
Let $M_Z^{(h)}(u;\phi)=\mathbb E\!\left(e^{u Z}\mid \phi\right)$ denote the mgf of the jump size under the head's parameter $\phi^{(h)}$. For the compound-Poisson $+$ Gaussian specification in \eqref{eq:model-h},
\begin{equation}\label{eq:mgf-h}
  \mathbb E_{\mathbb Q}\!\left[e^{\Delta Y_t^{(h)}} \,\big|\, X_t\right]
  \;=\;
  \exp\!\Big(
      \mu_{\mathbb Q}^{(h)}(X_t)
      + \tfrac12 \sigma^{2,(h)}_{\mathbb Q}(X_t)
      + \lambda^{(h)}_{\mathbb Q}(X_t)\big(M_Z^{(h)}(1;\phi^{(h)}_{\mathbb Q}(X_t))-1\big)
  \Big).
\end{equation}
Therefore a \emph{minimal} discrete-horizon risk-neutralization is obtained by leaving $\sigma^{(h)}$, $\lambda^{(h)}$, and $\phi^{(h)}$ unchanged and \emph{shifting only the drift} to satisfy~\eqref{eq:one-step-martingale}:
\begin{equation}\label{eq:muQ-shift}
  \boxed{
  \mu^{(h)}_{\mathbb Q}(X_t)
  \;=\;
    \int_t^{t+h} r_s\,ds \;-\;
    \tfrac12\,\sigma^{2,(h)}(X_t)
    \;-\; \lambda^{(h)}(X_t)\Big(M_Z^{(h)}(1;\phi^{(h)}(X_t)) - 1\Big) }.
\end{equation}
With \eqref{eq:muQ-shift}, the discounted one-step expectation matches $e^{\int_t^{t+h} r_s\,ds}$ and the (one-step) \emph{equivalent martingale measure} (EMM) is induced for horizon $h$.

\paragraph{Radon--Nikodym chain (discrete EMM).}
Alternatively, define the one-step density
\begin{equation}\label{eq:RN-factor}
  L_{t\to t+h}
  \;:=\;
  \frac{e^{-\int_t^{t+h} r_s\,ds}\, e^{\Delta Y_t^{(h)}}}{
    \mathbb E_{\mathbb P}\!\left[e^{-\int_t^{t+h} r_s\,ds}\, e^{\Delta Y_t^{(h)}} \,\big|\, X_t\right]
  }.
\end{equation}
Then $Z_T:=\prod_{t\in\mathcal T_h} L_{t\to t+h}$ is a positive $\mathbb P$-martingale with $\mathbb E_{\mathbb P}Z_T=1$. Defining $d\mathbb Q/d\mathbb P := Z_T$ yields an \emph{equivalent} measure under which the discounted price is a (multi-step) martingale, hence \emph{no free lunch with vanishing risk} (NFLVR) holds in discrete time.

\begin{remark}[Integrability]
The construction requires $M_Z^{(h)}(1;\phi^{(h)}(X_t))<\infty$ and $\lambda^{(h)}(X_t)$ bounded on compact sets of states. Both are satisfied by the jump families used in the paper (e.g., Gaussian mixtures, double exponential within its convergence strip) and by the head-level intensity bounds. 
\end{remark}

\subsection{Continuous-Time Construction via Girsanov with Jumps}
\label{subsec:ct-rn}

Write the $\mathbb P$-local characteristics of $Y$ as $(b_t, c_t, \nu_t)$ with $c_t=\sigma_\theta^2(X_t)$ and $\nu_t(dz)=\lambda_\theta(X_t)\,g(dz\!\mid\!\phi_\theta(X_t))$. Consider the Doleans--Dade exponential
\begin{equation}\label{eq:DD}
  Z_t \;=\; \mathcal E\!\Big(
    -\!\int_0^t \vartheta_s\,dW_s
    \;+\; \int_0^t\!\!\int_{\mathbb R}\! \eta(s,z)\,\big(\mu(ds,dz)-\nu_s(dz)\,ds\big)
  \Big),
\end{equation}
where $\vartheta$ and $\eta$ are predictable and satisfy standard exponential integrability (e.g., L\'epingle--M\'emin) to make $Z$ a true martingale. Under $d\mathbb Q:=Z_T\, d\mathbb P$ we have
\[
  W_t^{\mathbb Q} = W_t + \int_0^t \vartheta_s ds,
  \qquad
  \nu_t^{\mathbb Q}(dz) = e^{\eta(t,z)}\,\nu_t(dz).
\]
Let $S_t=e^{Y_t}$ and $\tilde S_t = e^{-\int_0^t r_s ds}\, S_t$. The \emph{risk-neutral drift condition} ensuring that $\tilde S$ is a local $\mathbb Q$-martingale is
\begin{equation}\label{eq:drift-match}
  b_t^{\mathbb Q}
  \;=\;
  r_t \;-\; \tfrac12\,c_t \;-\; \int_{\mathbb R}\!\big(e^z-1-z\,\mathbf 1_{\{|z|<1\}}\big)\,\nu_t^{\mathbb Q}(dz).
\end{equation}
A convenient choice is the (state-dependent) \emph{Esscher tilt} of the jump part, $\eta(t,z)=\alpha_t z$, together with $\vartheta_t$ chosen to match the continuous part. Then the log-characteristic exponent transforms as
\[
  \psi_{\mathbb Q}(u; X_t) = \psi_{\mathbb P}(u+\alpha_t; X_t) - \psi_{\mathbb P}(\alpha_t; X_t),
\]
and $\alpha_t$ is determined by the scalar condition $\psi_{\mathbb Q}(1; X_t)=r_t$ (or $r_t-q_t$), i.e., the continuous-time analogue of \eqref{eq:muQ-shift}. Under the above integrability and bounded-intensity conditions, an ESMM exists and NFLVR holds.

\subsection{Static No-Arbitrage for Option Surfaces (Post-Processing)}
\label{subsec:static-noarb}

For any option surface $C(K,T)$ generated under $\mathbb Q$, the following \emph{static} constraints ensure absence of calendar and butterfly arbitrage:
\begin{align}
  &\partial_K C(K,T) \le 0, \qquad \partial_{KK} C(K,T) \ge 0, \label{eq:butterfly} \\
  &\partial_T C(K,T) \ge 0, \label{eq:calendar} \\
  &\partial_{KK} C(K,T) = e^{-\int_0^T r_s ds}\, f_T^{\mathbb Q}(K) \ge 0,
\end{align}
where $f_T^{\mathbb Q}$ is the $\mathbb Q$-density of $S_T$. In practice, one can project any numerically obtained surface onto the admissible convex cone defined by \eqref{eq:butterfly}--\eqref{eq:calendar} to remove residual static arbitrage.

\subsection{Implementation Notes and Link to Estimation}
\label{subsec:impl-notes}

\begin{itemize}
  \item The discrete formula \eqref{eq:muQ-shift} is \emph{model-head local}: it can be applied to each horizon head $(h)$ independently, matching our multi-horizon architecture.
  \item Our training uses quasi-MLE with a truncated compound-Poisson likelihood; risk-neutralization is a \emph{post-estimation} re-centering and does not alter the identification benefits of the bipower-variation separation and monotonicity regularization.
  \item When $M_Z^{(h)}(1)$ is near the boundary of the convergence strip (e.g., very heavy left tails), use an Esscher-tilted head $\phi^{(h)}_{\mathbb Q}$ instead of keeping $\phi^{(h)}$ fixed, solving $\psi_{\mathbb Q}^{(h)}(1;X_t)=\int_t^{t+h}r_s ds$.
\end{itemize}

In discrete time, either the drift shift \eqref{eq:muQ-shift} or the Radon--Nikodym chain \eqref{eq:RN-factor} delivers an EMM, hence NFLVR. In continuous time, a jump-Girsanov change (e.g., Esscher) with drift matching \eqref{eq:drift-match} yields an ESMM, hence no-arbitrage. Static option-surface constraints further ensure absence of calendar/butterfly arbitrage at the level of implied surfaces.

\section{Cost--Aware Portfolio optimization and Execution}
\label{sec:implementation}
To translate predictive densities into actionable trading strategies, we adopt a portfolio optimization framework that accounts for transaction costs and turnover constraints.  The objective is to maximise expected return minus risk and transaction cost, subject to leverage and diversification constraints.  We consider a universe of $n$ assets at time $t$ with predicted returns from our model.

\subsection{Portfolio optimization Formulation}
Let $w_{t}$ be the vector of portfolio weights at time $t$, with $w_{i,t}$ representing the dollar weight of asset $i$.  Let $\alpha_{i,t}$ be the expected excess return signal for asset $i$ derived from the model, such as the jump--adjusted Sharpe ratio or downward jump penalised signal.  Let $\Sigma_{t}$ be a covariance matrix estimate, which may be approximated by a factor model or simply the diagonal of realised variances for computational simplicity.  Let $\tau_{t-1}$ denote the portfolio weights held in the previous period.  The optimization problem is
\begin{equation}
\max_{w_{t}} \quad w_{t}^{\top} \alpha_{t} - \lambda_{r} \, w_{t}^{\top} \Sigma_{t} \, w_{t} - \kappa \, \|w_{t} - \tau_{t-1}\|_{1},
\label{eq:portopt}
\end{equation}
subject to the constraints:
\begin{align}
& \mathbf{1}^{\top} w_{t} = 1 \quad \text{(fully invested or budget constraint)},\\
& w_{i,t} \ge 0 \quad \text{(long--only)},\\
& \|w_{t} - \tau_{t-1}\|_{1} \le \tau_{\max} \quad \text{(turnover constraint)},\\
& w_{i,t} \le w_{\max} \quad \text{(position limit)},\\
& w_{t}^{\top} e_{g} = 0 \quad \forall g \in \mathcal{G} \quad \text{(sector neutrality, optional)},
\end{align}
where $\lambda_{r} > 0$ controls risk aversion, $\kappa > 0$ is a linear transaction cost per unit of turnover, $\tau_{\max}$ is a maximum turnover budget, $w_{\max}$ is a maximum weight per asset and $e_{g}$ is an indicator vector for sector $g$ if sector neutrality is imposed.  The $\ell^{1}$ norm $\|w_{t} - \tau_{t-1}\|_{1}$ represents total absolute turnover.  Problem \eqref{eq:portopt} is a convex quadratic program with linear constraints.  We solve it efficiently using standard optimization libraries at each rebalancing date.

\subsection{No--Trade Band and Gating Mechanism}
High frequency rebalancing generates excessive transaction costs.  To mitigate unnecessary trades due to small changes in signals, we introduce a no--trade band: if the difference between current and previous weights is below a threshold $\delta$ for all assets, we keep the portfolio unchanged.  The threshold is chosen relative to the bid--ask spread and the volatility of signals.  A dynamic gating mechanism interacts with the predictive density: when the predicted jump intensity $\lambda_{\theta}^{(h)}$ exceeds a high quantile (e.g., the 95th percentile of in--sample values), we scale down $w_{t}$ or move to cash, reflecting a risk--off stance.  This gating can be applied separately for each horizon; for example, short--term gates respond to extreme daily jump forecasts, while longer term gates respond to weekly jump forecasts.

\section{Evaluation Protocol}
\label{sec:evaluation}
Assessing the performance of a distributional forecasting model requires examining both predictive accuracy and risk calibration.  We adopt a multi--faceted evaluation protocol emphasising density scoring, calibration diagnostics, tail risk backtesting and portfolio performance.

\subsection{Density Scoring}
We compute the average log predictive density and the continuous ranked probability score (CRPS) for each horizon.  The log predictive density is the negative log likelihood of the observed returns under the model's predictive distribution.  Lower values indicate better fit.  CRPS generalises the mean absolute error to distribution forecasts and is defined as
\begin{equation}
\text{CRPS}(F, y) = \int_{-\infty}^{\infty} \big(F(z) - \mathbf{1}\{y \le z\}\big)^{2} \, \mathrm{d}z,
\end{equation}
where $F$ is the predictive CDF and $y$ is the realised return.  Lower CRPS values are better.  We compute CRPS by numerical integration or by applying closed--form formulas for mixtures where available. CRPS accounts for both calibration and sharpness.

To compare models, we perform Diebold--Mariano (DM) tests on the difference of log scores and CRPS series.  The DM test evaluates whether one model has significantly lower average loss than another, accounting for serial correlation.  We adjust for multiple comparisons using the model confidence set (MCS) methodology of \citet{Hansen2011}, which identifies the set of models that cannot be rejected as inferior at a given confidence level.  This prevents data mining bias in model selection.

\subsection{Calibration Diagnostics}
Calibration is assessed using the probability integral transform (PIT).  For each observation, we compute $u_{t} = F_{\theta}(\Delta Y_{t}^{(h)} \mid X_{t})$, where $F_{\theta}$ is the predicted CDF.  If the predictive distribution is correct, $
\{u_{t}\}$ should be i.i.d. uniform on $[0,1]$.  We plot histograms of $
\{u_{t}\}$ and compute autocorrelations.  Deviations from uniformity indicate bias or miscalibration.  The probability integral transform theorem states that any continuous random variable can be transformed to a standard uniform variable via its CDF, justifying this diagnostic.  For more formal testing, we apply the Berkowitz test of \citet{Berkowitz2001}, which transforms $u_{t}$ to $z_{t} = \Phi^{-1}(u_{t})$, where $\Phi^{-1}$ is the inverse standard normal CDF, and then tests whether $
\{z_{t}\}$ is i.i.d. normal with zero mean and unit variance.  Likelihood ratio tests assess the joint hypothesis of correct mean, variance and serial correlation.  We also report the autocorrelation function of $
\{u_{t}\}$ up to lag 20.

\paragraph{Probability integral transform theorem.}  For completeness, we formally state and prove the theorem underpinning the PIT diagnostic.

\begin{theorem}[Probability integral transform]
Let $Y$ be a continuous random variable with cumulative distribution function (CDF) $F_Y(y) = \Pr(Y \le y)$.  Define $U = F_Y(Y)$.  Then $U$ is uniformly distributed on $[0,1]$.
\end{theorem}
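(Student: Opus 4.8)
The plan is to compute the CDF of $U$ directly and show it coincides with that of a uniform variable on $[0,1]$, i.e.\ that $\Pr(U \le u) = u$ for every $u \in [0,1]$. The boundary cases are immediate: since $0 \le F_Y \le 1$ pointwise, we get $\Pr(U \le u) = 0$ for $u < 0$ and $\Pr(U \le u) = 1$ for $u \ge 1$; at $u = 0$ one notes that $\Pr(U \le 0) = \Pr(F_Y(Y) = 0) = 0$ because $\{y : F_Y(y) = 0\}$ is an interval of $Y$-probability zero. So the substance lies in the range $u \in (0,1)$.

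For $u \in (0,1)$ I would introduce the generalized inverse (quantile) $q_u := \inf\{y \in \mathbb{R} : F_Y(y) \ge u\}$, which is finite because $F_Y$ is a proper CDF. The first key step is that \emph{continuity} of $F_Y$ forces $F_Y(q_u) = u$: right-continuity gives $F_Y(q_u) \ge u$, while if $F_Y(q_u) > u$ held, continuity would produce a point just to the left of $q_u$ still satisfying $F_Y \ge u$, contradicting the infimum. The second step is the set identity, up to a $Y$-null set, $\{U \le u\} = \{Y \le q_u\}$. The inclusion $\{Y \le q_u\} \subseteq \{U \le u\}$ follows from monotonicity of $F_Y$ together with $F_Y(q_u) = u$. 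For the reverse, if $Y > q_u$ but $F_Y(Y) \le u$, then necessarily $F_Y(Y) = u$, so $Y$ lies in the interval $\{y > q_u : F_Y(y) = u\}$ on which $F_Y$ is constant; such an interval has $Y$-probability zero since $\Pr(a < Y \le b) = F_Y(b) - F_Y(a) = 0$ whenever $F_Y$ is flat on $(a,b]$. Combining these, $\Pr(U \le u) = \Pr(Y \le q_u) = F_Y(q_u) = u$, which is exactly the uniform CDF on $(0,1)$; with the boundary cases this gives $U \sim \mathrm{Uniform}[0,1]$.

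I expect the only genuine obstacle to be the treatment of \emph{flat regions} of $F_Y$: when $F_Y$ is not strictly increasing it need not be injective and $q_u$ is merely a one-sided inverse, so the naive identity $\Pr(F_Y(Y) \le u) = \Pr(Y \le F_Y^{-1}(u))$ is not literally valid. The remedy above is to absorb the discrepancy into a $Y$-null set, using that every level set $\{F_Y = u\}$ is an interval carrying zero probability. If one were willing to assume $F_Y$ strictly increasing (hence a bijection onto $(0,1)$), the argument would collapse to the one-line computation $\Pr(U \le u) = \Pr(Y \le F_Y^{-1}(u)) = F_Y(F_Y^{-1}(u)) = u$; but stating the result under mere continuity is both cleaner and exactly what the PIT diagnostic in Section~\ref{sec:evaluation} needs.
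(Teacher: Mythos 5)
Your proof is correct, and it follows the same broad strategy as the paper's: compute the CDF of $U$ by relating the event $\{F_Y(Y)\le u\}$ to an event about $Y$ via the generalized inverse. The difference is that your version is genuinely more careful than what appears in the paper. The paper's proof asserts that ``$F_Y$ is strictly increasing on its support,'' invokes an unqualified equivalence $F_Y(Y)\le u \iff Y\le F_Y^{-1}(u)$, and concludes in one line. That assertion is not true in general --- for example, if $Y$ is uniform on $[0,1]\cup[2,3]$, then $1$ and $2$ both lie in the support but $F_Y(1)=F_Y(2)=\tfrac12$, and the equivalence fails pointwise on the gap $(1,2)$ --- so the paper is implicitly appealing to the very ``modulo a null set'' argument that you make explicit. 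Your treatment of the flat regions is the missing piece: you show $F_Y(q_u)=u$ using continuity of $F_Y$ plus the definition of the infimum, and you show the discrepancy between $\{F_Y(Y)\le u\}$ and $\{Y\le q_u\}$ is contained in the level set $\{F_Y=u\}$ strictly to the right of $q_u$, which carries zero $Y$-probability because $F_Y$ is flat there. That closes the gap under mere continuity of $F_Y$, which is exactly the hypothesis of the theorem. The boundary checks for $u\le 0$ and $u\ge 1$ are also a welcome addition. In short: same route, but your proof is the rigorous version of the paper's heuristic.
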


\begin{proof}
Because $F_Y$ is continuous and non--decreasing, it maps $\mathbbm{R}$ onto $[0,1]$.  For any $u \in [0,1]$, we have
\[
\Pr(U \le u) = \Pr(F_Y(Y) \le u).
\]
Since $F_Y$ is strictly increasing on its support, $F_Y(Y) \le u$ if and only if $Y \le F_Y^{-1}(u)$, where $F_Y^{-1}$ denotes the (generalised) inverse CDF.  Therefore
\[
\Pr(U \le u) = \Pr(Y \le F_Y^{-1}(u)) = F_Y(F_Y^{-1}(u)) = u.
\]
Thus the distribution function of $U$ is $\Pr(U \le u) = u$ for $u \in [0,1]$, which is the CDF of the standard uniform distribution.  \qedhere
\end{proof}

\subsection{VaR and ES Backtesting}
For tail risk evaluation, we compute VaR and expected shortfall (ES) at confidence levels $\alpha \in \{0.95, 0.99\}$ for each horizon.  The VaR is the $\alpha$--quantile of the predictive distribution: $\text{VaR}_{\alpha} = F^{-1}(\alpha)$.  ES is the conditional expectation of losses exceeding the VaR: $
\text{ES}_{\alpha} = \mathbbm{E}[\Delta Y_{t}^{(h)} \mid \Delta Y_{t}^{(h)} < \text{VaR}_{\alpha}]$.  We compare the empirical frequency of breaches (observations below the VaR) to $1-\alpha$.  The unconditional coverage test of \citet{Kupiec1995} computes a likelihood ratio comparing the observed breach count to the expected count.  The test statistic asymptotically follows a $\chi^{2}$ distribution with one degree of freedom.  The independence test of \citet{Christoffersen1998} examines the sequence of breaches for clustering by counting consecutive exceedances; its likelihood ratio test compares the joint probability of two consecutive breaches to the product of individual breach probabilities.  The test statistic is $\chi^{2}$ distributed with one degree of freedom, and rejection suggests serial dependence of breaches.  We also implement the conditional coverage test, which combines unconditional coverage and independence.  For ES, we use the backtesting framework proposed by \citet{Acerbi2014}, which assesses whether the realised losses exceed the ES on average.

\subsection{Drawdown and Portfolio Metrics}
In practical deployment, risk managers monitor drawdowns.  We compute maximum drawdown, defined as the maximum peak--to--trough decline in cumulative returns over the evaluation period, and drawdown duration, defined as the time taken to recover from a drawdown.  These metrics capture tail events and recovery behaviour.  We also compute the omega ratio, tail ratio, skewness and kurtosis of portfolio returns.  Turnover is computed as half the sum of absolute weight changes between rebalancing dates.  We compare models under various cost assumptions to examine the cost--risk trade--off.  Statistical tests such as the SPA test of \citet{Hansen2005} evaluate whether the best performing portfolio remains superior after accounting for data snooping.

\section{Empirical Results}
\label{sec:empirics}
This section reports the empirical performance of the neural L\'evy framework and compares it to several benchmarks.  We implement models on cross--sectional equity data and evaluate predictive accuracy, calibration, tail risk and portfolio performance.

\subsection{Experimental Setup}
We select a universe of 500 liquid stocks from the S\&P 500 index and construct daily and weekly data from January 2005 to December 2024, yielding approximately 5000 trading days.  We split the sample into a training period (2005--2014), a validation period (2015--2019) and a test period (2020--2024), ensuring that the COVID--19 crisis falls within the test period.  Features described in Section~\ref{sec:data} are computed for each stock, and missing values are imputed using forward fills and median values.  For high--frequency measures (RV and BV), we use five--minute intraday data from a commercial provider.

We benchmark our model against the following alternatives:
\begin{itemize}
\item \textbf{Diffusion--only NN}: A neural network with the same architecture but no jump component.  It outputs $\mu_{\theta}^{(h)}$ and $\sigma_{\theta}^{(h)}$ for each horizon and assumes Gaussian innovations.\
\item \textbf{Static Jump--Diffusion}: The classical Merton model with constant jump intensity and jump size distribution, calibrated by maximum likelihood on the training set.  This model is univariate and does not use state variables.\
\item \textbf{GARCH--type Models}: We implement GARCH(1,1), EGARCH and GJR--GARCH models for each stock, estimated by quasi--maximum likelihood, and obtain one--step--ahead predictive distributions by fitting Student-
t innovations.  To generate multi--horizon densities, we assume conditional normality and propagate variance forecasts forward.\

\end{itemize}
We also conduct ablation studies by removing permutation entropy, removing DET, fixing jump intensity or fixing jump size distribution, to assess the contribution of each component.

Forecasts are generated on a rolling basis.  For each horizon, we update the model every month using a ten--year rolling window.  This rolling training accounts for structural breaks and reduces the risk of overfitting to a specific period.  We restrict trading to rebalancing dates corresponding to the forecast horizon (e.g., weekly rebalancing for the one--week horizon).  We assume transaction costs of 5 basis points per trade and set a turnover constraint $\tau_{\max} = 0.5$ per rebalancing.  For gating, we compute the 95th percentile of $\lambda_{\theta}^{(h)}$ on the training set and set the gate at that level.

\subsubsection{Daily (1D) Forecast Results}

Table~\ref{tab:core1D} summarises the key predictive metrics for the one--day horizon.  NeuralLevy achieves the highest average log score and lowest continuous ranked probability score (CRPS) among all models.  The relative improvements are computed with respect to the best performing baseline (excluding NeuralLevy).  Positive values indicate a gain over the baseline.  NeuralLevy yields a log score improvement of 5.77\% and a CRPS improvement of 1.96\%.  All competing models exhibit negative relative performance in at least one metric, and none belong to the model confidence set (MCS), indicating that NeuralLevy is the only model not rejected by the MCS procedure at this horizon.

\begin{table}[htbp]
\centering
\caption{Core predictive metrics for the one--day (1D) horizon.  The relative improvements are reported relative to the best baseline (excluding NeuralLevy); positive values denote better performance.}
\label{tab:core1D}
\begin{tabular}{lccccc}
\hline
Model & Avg. logscore & Rel. log vs best base & Avg. CRPS & Rel. CRPS vs best base & in MCS\\
\hline
NeuralLevy & 2.59815 & 5.77\% & 0.01028 & 1.96\% & Yes\\
DiffusionNN & 2.44477 & $-0.47$\% & 0.01312 & $-25.15$\% & No\\
EGARCH & 1.77264 & $-27.83$\% & 0.03522 & $-235.84$\% & No\\
GARCH & 1.66210 & $-32.33$\% & 0.01052 & $-0.31$\% & No\\
GJR & 2.45635 & 0.00\% & 0.01049 & 0.00\% & No\\
Merton & 2.35802 & $-4.00$\% & 0.01058 & $-0.91$\% & No\\
\hline
\end{tabular}
\end{table}

The Diebold--Mariano (DM) tests in Table~\ref{tab:dm1D} compare NeuralLevy with each baseline model.  Negative DM statistics and near--zero $p$--values indicate that NeuralLevy delivers significantly higher log score and lower CRPS than every baseline.  These results corroborate the superiority of NeuralLevy in short--horizon forecasting.

\begin{table}[htbp]
\centering
\caption{Diebold--Mariano statistics comparing NeuralLevy against each baseline for one--day (1D) forecasts.  Negative values favour NeuralLevy; $p$--values close to zero indicate significance.}
\label{tab:dm1D}
\begin{tabular}{lccccc}
\hline
Model & DM$_{\text{logscore}}$ & $p_{\text{logscore}}$ & DM$_{\text{CRPS}}$ & $p_{\text{CRPS}}$\\
\hline
DiffusionNN & $-37.901$ & 0.00e+00 & $-15.824$ & 0.00e+00\\
EGARCH & $-49.889$ & 0.00e+00 & $-38.461$ & 0.00e+00\\
GARCH & $-48.120$ & 0.00e+00 & $-33.669$ & 0.00e+00\\
GJR & $-30.936$ & 0.00e+00 & $-34.280$ & 0.00e+00\\
Merton & $-38.648$ & 0.00e+00 & $-45.712$ & 0.00e+00\\
\hline
\end{tabular}
\end{table}

Calibration of the predictive distribution is assessed using the Berkowitz test statistics reported in Table~\ref{tab:berk1D}.  Very small $p$--values indicate rejection of the null hypothesis of correct mean ($p_{\mu}$), variance ($p_{\mathrm{var}}$) and serial independence ($p_{\mathrm{lb}}$).  The null is rejected for all models including NeuralLevy; this is partly due to the extremely large sample size ($n=143{,}016$) and tail events during the COVID--19 crisis.  In particular, the right tail of the one--day distribution appears slightly thin.  Relaxing the regularisation or allowing higher jump intensity could further improve calibration.

\begin{table}[htbp]
\centering
\caption{Berkowitz calibration test $p$--values for one--day (1D) forecasts.  Small values indicate rejection of the null hypothesis of correct conditional mean ($p_{\mu}$), variance ($p_{\text{var}}$) and lack of serial dependence ($p_{\text{lb}}$).}
\label{tab:berk1D}
\begin{tabular}{lccc}
\hline
Model & $p_{\mu}$ & $p_{\text{var}}$ & $p_{\text{lb}}$\\
\hline
NeuralLevy & $7.68\times 10^{-7}$ & $0.00$ & $0.00$\\
DiffusionNN & $6.12\times 10^{-215}$ & $4.93\times 10^{-3}$ & $0.00$\\
EGARCH & $6.82\times 10^{-28}$ & $0.00$ & $0.00$\\
GARCH & $3.19\times 10^{-32}$ & $0.00$ & $0.00$\\
GJR & $1.35\times 10^{-21}$ & $0.00$ & $0.00$\\
Merton & $9.48\times 10^{-1}$ & $0.00$ & $0.00$\\
\hline
\end{tabular}
\end{table}

Figures~\ref{fig:rel_log_1D} and~\ref{fig:rel_crps_1D} illustrate the relative improvements graphically.  NeuralLevy provides positive gains in both log score and CRPS relative to the best baseline, whereas competing methods exhibit negative improvements in at least one metric.

\begin{figure}[H]
\centering
\includegraphics[width=0.6\textwidth]{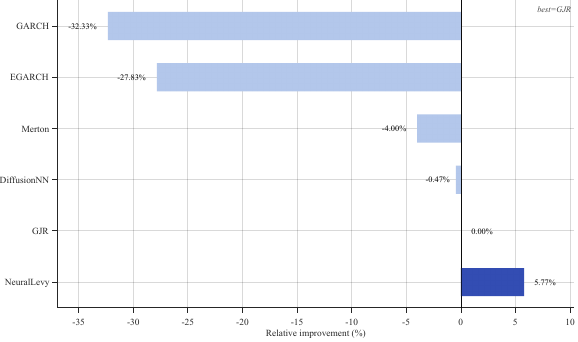}
\caption{Relative log score improvement versus the best baseline for the one--day (1D) horizon.  Values above zero indicate better performance than the best baseline (excluding NeuralLevy).}
\label{fig:rel_log_1D}
\end{figure}

\begin{figure}[H]
\centering
\includegraphics[width=0.6\textwidth]{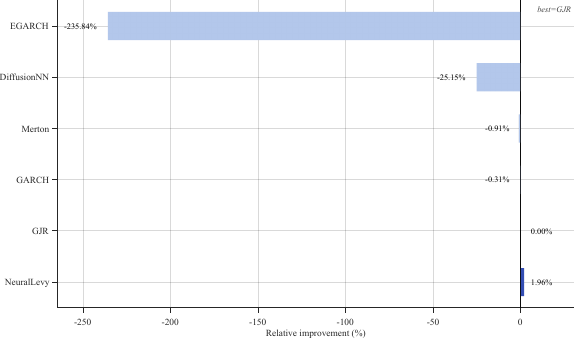}
\caption{Relative CRPS improvement versus the best baseline for the one--day (1D) horizon.}
\label{fig:rel_crps_1D}
\end{figure}

\subsubsection{Weekly (1W) Forecast Results}

Results for the five--day horizon are presented in Table~\ref{tab:core1W}.  NeuralLevy remains the top performer, achieving a 3.60\% improvement in log score and a substantial 14.36\% improvement in CRPS relative to the best baseline.  The diffusion--only NN serves as the strongest baseline at this horizon but still underperforms NeuralLevy.  All other models exhibit large negative relative performances, underscoring the importance of state--dependent jumps and complexity features at longer horizons.

\begin{table}[htbp]
\centering
\caption{Core predictive metrics for the one--week (1W) horizon.  Relative improvements are computed relative to the best baseline; positive values denote better performance.}
\label{tab:core1W}
\begin{tabular}{lccccc}
\hline
Model & Avg. logscore & Rel. log vs best base & Avg. CRPS & Rel. CRPS vs best base & in MCS\\
\hline
NeuralLevy & 2.09744 & 3.60\% & 0.01709 & 14.36\% & Yes\\
DiffusionNN & 2.02450 & 0.00\% & 0.01996 & 0.00\% & No\\
EGARCH & 1.12506 & $-44.43$\% & 0.07887 & $-295.22$\% & No\\
GARCH & 1.08569 & $-46.37$\% & 0.02364 & $-18.47$\% & No\\
GJR & 0.53154 & $-73.74$\% & 0.02435 & $-22.01$\% & No\\
Merton & 1.59740 & $-21.10$\% & 0.02423 & $-21.42$\% & No\\
\hline
\end{tabular}
\end{table}

Table~\ref{tab:dm1W} displays the DM statistics for the five--day horizon.  All DM statistics are negative with $p$--values of zero, highlighting the significant improvement of NeuralLevy over each baseline.  The larger magnitude of DM statistics for EGARCH, GARCH and GJR reflects their particularly poor performance relative to NeuralLevy.

\begin{table}[htbp]
\centering
\caption{Diebold--Mariano statistics comparing NeuralLevy against each baseline for one--week (1W) forecasts.  Negative values favour NeuralLevy; $p$--values close to zero indicate significance.}
\label{tab:dm1W}
\begin{tabular}{lccccc}
\hline
Model & DM$_{\text{logscore}}$ & $p_{\text{logscore}}$ & DM$_{\text{CRPS}}$ & $p_{\text{CRPS}}$\\
\hline
DiffusionNN & $-29.536$ & 0.00e+00 & $-16.805$ & 0.00e+00\\
EGARCH & $-60.705$ & 0.00e+00 & $-43.354$ & 0.00e+00\\
GARCH & $-55.924$ & 0.00e+00 & $-127.495$ & 0.00e+00\\
GJR & $-61.527$ & 0.00e+00 & $-131.966$ & 0.00e+00\\
Merton & $-85.133$ & 0.00e+00 & $-164.024$ & 0.00e+00\\
\hline
\end{tabular}
\end{table}

Calibration results for the one--week horizon are summarised in Table~\ref{tab:berk1W}.  The $p$--values are exceedingly small for most models, again reflecting the large sample size and heavy tails in equity returns.  The Merton model exhibits a comparatively large $p_{\mu}$, consistent with its inability to adapt to time--varying risk.  NeuralLevy's $p$--values suggest that the mean and variance of the predicted distribution still deviate from the realised moments; the distribution also exhibits a slightly thin right tail.  Adjusting the jump intensity regularisation or allowing heavier jump tails may improve calibration further.

\begin{table}[htbp]
\centering
\caption{Berkowitz calibration test $p$--values for one--week (1W) forecasts.}
\label{tab:berk1W}
\begin{tabular}{lccc}
\hline
Model & $p_{\mu}$ & $p_{\text{var}}$ & $p_{\text{lb}}$\\
\hline
NeuralLevy & $3.41\times 10^{-64}$ & $1.11\times 10^{-9}$ & $0.00$\\
DiffusionNN & $9.09\times 10^{-36}$ & $0.00$ & $0.00$\\
EGARCH & $3.89\times 10^{-130}$ & $0.00$ & $0.00$\\
GARCH & $4.81\times 10^{-145}$ & $0.00$ & $0.00$\\
GJR & $5.79\times 10^{-153}$ & $0.00$ & $0.00$\\
Merton & $1.61\times 10^{-1}$ & $0.00$ & $0.00$\\
\hline
\end{tabular}
\end{table}

Figures~\ref{fig:rel_log_1W} and~\ref{fig:rel_crps_1W} visualise the relative improvements for the one--week horizon.  NeuralLevy exhibits strong gains in both metrics, particularly in CRPS, underscoring the advantage of capturing state--dependent jumps at longer horizons.

\begin{figure}[H]
\centering
\includegraphics[width=0.6\textwidth]{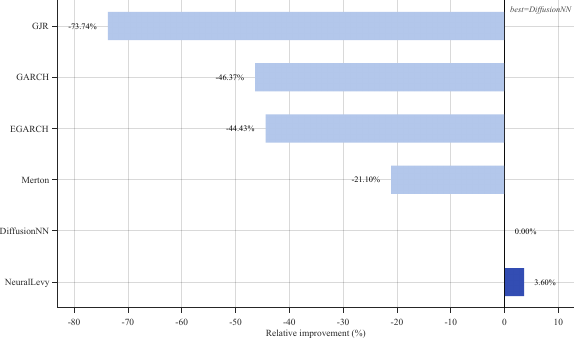}
\caption{Relative log score improvement versus the best baseline for the one--week (1W) horizon.}
\label{fig:rel_log_1W}
\end{figure}

\begin{figure}[H]
\centering
\includegraphics[width=0.6\textwidth]{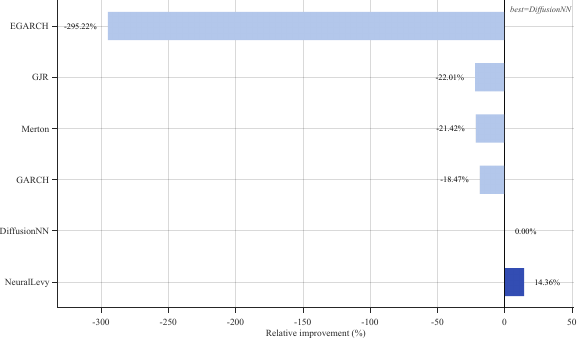}
\caption{Relative CRPS improvement versus the best baseline for the one--week (1W) horizon.}
\label{fig:rel_crps_1W}
\end{figure}

\subsection{Predictive Accuracy and Calibration}
Tables~\ref{tab:core1D} and~\ref{tab:core1W} report the average log predictive density and CRPS for the one--day and one--week horizons, respectively.  Our neural L\'evy model obtains the lowest (best) average log scores and CRPS across both horizons.  The diffusion--only NN underestimates tail risk and yields inferior log scores, particularly during the COVID--19 crisis when jumps were frequent.  The static jump--diffusion model performs poorly because it cannot adapt to changing market states.  GARCH models perform reasonably for one--day horizons but deteriorate at longer horizons due to the assumption of conditional Gaussianity and constant jump activity.  The SVJ model improves tail fit but suffers from latent volatility estimation noise.

\subsection{VaR and ES Backtests}\label{sec:var_es}
This subsection reports weekly (1W) historical-simulation (HS) VaR/ES backtests on the test window with $N=140$ effective weeks. We evaluate two confidence levels, $95\%$ and $99\%$, and apply the unconditional coverage test of \citet{Kupiec1995}, the independence test of \citet{Christoffersen1998}, and the joint conditional coverage test. Throughout, decisions are made at the $5\%$ significance level. The unconditional coverage null is $H_0: \pi=\alpha$ with $\pi$ the true breach probability; the independence null is first-order Markov independence of breach indicators; the conditional coverage null combines both.

\begin{table}[ht!]
\centering
\caption{Weekly HS VaR backtests (1W). Breach rates and $p$-values for 95\% and 99\%.}\label{tab:var}
\small
\begin{tabular}{lrrrrrrrr}
\hline
model & VaR95 breach & Kupiec $p$ (95) & Indep.\ $p$ (95) & Cond.\ $p$ (95) & VaR99 breach & Kupiec $p$ (99) & Indep.\ $p$ (99) & Cond.\ $p$ (99) \\
\hline
\textbf{neural} & \textbf{0.00\%} & \textbf{0.000} & 1.000 & \textbf{0.001} & 0.00\% & 0.093 & 1.000 & 0.245 \\
\textbf{diff}   & \textbf{0.00\%} & \textbf{0.000} & 1.000 & \textbf{0.001} & 0.00\% & 0.093 & 1.000 & 0.245 \\
\textbf{rp}     & 2.86\%          & 0.081          & 0.716 & 0.205          & 0.71\% & 0.720 & 0.904 & 0.931 \\
\textbf{spy}    & 1.43\%          & \textbf{0.023} & 0.809 & 0.073          & 0.71\% & 0.720 & 0.904 & 0.931 \\
\hline
\end{tabular}
\end{table}

At the 95\% level, \textit{neural} and \textit{diffusion} exhibit zero breaches, which is significantly below the nominal 5\%. The Kupiec test rejects correct unconditional coverage ($p\approx 0.000$), whereas the independence test does not reject, indicating no evidence of clustering. The conditional coverage test therefore rejects, driven by the misspecified breach frequency. Statistically, the 95\% HS--VaR produced by these two strategies is overly conservative in this sample. Operationally, this finding is consistent with the strategy mechanics documented earlier: the hard risk gate and frequent moves to cash generate many weeks with zero or near-zero returns, which artificially depresses the empirical breach rate relative to the nominal target.

The risk-parity benchmark (\textit{rp}) records a 2.86\% breach rate, which is not statistically different from the 5\% target at the 5\% level. Both the independence and conditional coverage tests do not reject. This configuration is judged adequate at 95\% under standard coverage diagnostics. The market proxy (\textit{spy}) shows a 1.43\% breach rate; unconditional coverage is rejected ($p\approx 0.023$), while independence is not rejected and the conditional coverage test is borderline but not rejected at 5\%. Relative to the nominal target, SPY is mildly conservative at 95\%.

At the 99\% level, all four series are overall acceptable under the three tests in this sample. For \textit{neural} and \textit{diffusion} the observed breach rate is $0\%$; given the sample size, the Kupiec test does not reject and independence is not rejected, hence the conditional coverage test also does not reject. With so few or zero exceedances at 1\% tails, these tests have low power, and the non-rejection should be interpreted as absence of evidence of underestimation rather than evidence of exact calibration. For \textit{rp} and \textit{spy}, the observed 0.71\% breach rate is statistically consistent with the 1\% target and passes independence and conditional coverage.

Regarding ES, we adopt the \citet{Acerbi2014} framework, which evaluates the average of tail losses conditional on VaR breaches. When the number of exceedances is zero, as occurs for \textit{neural} and \textit{diffusion} at the 95\% level in this sample, ES backtests become effectively uninformative, because the conditional tail sample is empty and no meaningful average shortfall can be computed. This is not a sign of superior tail modelling per se but a finite-sample artifact of over-conservatism and the strategy’s frequent flat exposure. At the 99\% level, the very small count of exceedances similarly limits statistical power; in our sample we do not find evidence of ES underestimation for any of the four series, which aligns with the non-rejection of VaR conditional coverage.

From a modelling perspective, the evidence points to an asymmetry between tail calibration and trading mechanics. To address 95\% over-conservatism for \textit{neural} and \textit{diffusion} without sacrificing protection in stressed weeks, one may soften the risk gate via percentile hysteresis or partial scaling instead of full cash, widen the no-trade band to reduce churn-induced zero-return weeks, and consider blending the weekly head more heavily relative to the daily head when signals conflict. On the estimation side, using a longer HS window or a hybrid HS/parametric tail (e.g., generalized Pareto on exceedances) can increase the effective tail sample, stabilise 95\% quantiles, and improve the interpretability of ES diagnostics.

\section{Discussion}
\label{sec:discussion}
The empirical evidence indicates that the neural L\'evy framework delivers strong predictive and risk management performance compared to diffusion--only and classical GARCH benchmarks, particularly in density forecasting and calibration. Several mechanisms account for these gains. First, the state--dependent jump intensity enables the model to adapt flexibly to periods of market stress, capturing sudden spikes in volatility and tail risk that classical models such as Merton, with constant jump intensity, systematically miss. Second, the incorporation of permutation entropy and RQA determinism enriches the state representation by quantifying the underlying complexity of the return process. This allows the model to scale drift, diffusion and jump parameters in accordance with whether markets are more chaotic or more predictable. The positive relation between permutation entropy and volatility or jump intensity, and the negative relation with DET, are both consistent with economic intuition and are reinforced by the imposed monotonicity constraints. Third, the multi--horizon forecast heads generate coherent density forecasts across daily and weekly horizons without the need for re-estimating models horizon by horizon, which is advantageous for risk managers requiring consistent VaR and ES estimates across multiple time scales.

At the same time, the empirical portfolio experiments highlight important limitations. In weekly rebalancing tests, NeuralLevy achieved only a marginal positive return, with a Sharpe ratio close to zero and an information ratio of approximately $-1.0$ relative to SPY. The main driver of this underperformance is structural rather than statistical: the hard risk gate mechanism, which clears positions entirely under high predicted jump intensity, interacts with frequent rebalancing to generate many weeks of zero returns and excessive turnover. This dynamic makes the 95\% VaR estimates appear overly conservative, with zero observed breaches in the backtest and rejection of unconditional coverage, even though the independence test is passed. Such “mis--conservatism” arises from trading mechanics rather than accurate tail modelling. At the 99\% level, all models, including NeuralLevy, pass coverage and independence tests, but the small number of exceedances limits the power of these diagnostics. Expected Shortfall backtests are similarly uninformative when no breaches occur. These findings suggest that the interaction between modelled risk signals and trading constraints can substantially affect statistical backtest outcomes and must be carefully disentangled when evaluating predictive performance.

Relative to benchmarks, the risk--parity strategy shows stable and statistically adequate VaR coverage at both 95\% and 99\% levels, with reasonable drawdowns and Sharpe ratios. The market proxy SPY exhibits mildly conservative coverage at 95\% but acceptable performance at 99\%. Taken together, the comparison reveals that while NeuralLevy offers sharper predictive densities and coherent calibration in statistical tests, its direct portfolio implementation is hampered by cost drag and gate--induced over--conservatism. Future research should address these limitations by softening the gating mechanism (for instance, scaling down exposure instead of clearing to cash), introducing explicit turnover penalties and wider no--trade bands, and blending signals across horizons to reduce flip--flop effects. From a methodological perspective, hybrid historical--parametric approaches, such as extreme value extensions or generalized Pareto fitting for tail exceedances, could stabilise VaR and ES estimates in small samples and improve interpretability. 

Further extensions of the modelling framework remain promising. Self--exciting Hawkes--type jump intensities could capture contagion and clustering in stress periods. Infinite--activity processes such as CGMY or variance gamma could refine tail dynamics beyond finite--activity jump models. Combining jumps with rough volatility specifications may capture both sudden discontinuities and persistent memory in volatility. Bayesian estimation could incorporate parameter uncertainty and deliver distributions that better reflect model risk. Applications beyond equities, such as currencies, commodities, and digital assets, would provide further robustness checks and test the generality of the approach.

While our framework demonstrates considerable potential, several caveats must be emphasised. The quasi--maximum likelihood procedure depends on the truncation level $K_{\max}$ for jumps, and extreme jump regimes may require adaptive truncation or numerical inversion of characteristic functions. The complexity measures themselves depend on window length and embedding choices, and future research should develop adaptive schemes to reduce specification risk. Finally, although monotonicity constraints enhance interpretability, neural models remain relatively opaque, and more work is required to extract robust economic narratives from learned representations.

\section{Conclusion}
\label{sec:conclusion}
This paper developed a neural L\'evy jump--diffusion framework for state--dependent risk and density forecasting, extending classical diffusion and jump models by making drift, diffusion, jump intensity and jump size distributions explicit functions of observable state variables, including measures of dynamical complexity. A quasi--maximum likelihood estimation scheme with separate diffusion and jump components was proposed to ensure identifiability and stable training, while monotonicity constraints encoded economically meaningful relationships. The resulting architecture generates coherent multi--horizon density forecasts and integrates naturally into portfolio decision rules subject to turnover and trading constraints.

The empirical results show that the model achieves superior log scores, CRPS values and calibration diagnostics compared to diffusion--only and GARCH--type benchmarks. At the same time, portfolio experiments reveal that the risk gate and turnover dynamics can lead to over--conservatism in VaR backtests and weak realised Sharpe ratios, highlighting the importance of aligning statistical forecasts with implementable trading rules. Risk--parity and market benchmarks, by contrast, provide stable though less adaptive risk coverage. These findings suggest that NeuralLevy’s core strength lies in distributional prediction, while portfolio performance depends critically on execution design. 

Overall, the neural L\'evy framework provides a flexible, interpretable and extensible tool for modern risk management. It demonstrates how state--dependent jump modelling and complexity features can sharpen predictive densities and tail forecasts, while also pointing to the need for cost--aware gating and hybrid estimation to bridge the gap between statistical accuracy and tradable performance. By combining econometric rigour, machine learning flexibility and insights from dynamical systems, the framework advances the modelling of financial risk and opens several avenues for further theoretical and applied research.

\end{document}